\definecolor{MyDarkBlue}{rgb}{0,0.08,0.45}
\definecolor{cites}{HTML}{324b13}
\definecolor{links}{HTML}{1a663b}
\definecolor{MyLightMagenta}{cmyk}{0.1,0.8,0,0.1}
\definecolor{sblue}{HTML}{0049A9}
\definecolor{scyan}{HTML}{CBEAFC}
\definecolor{sred}{HTML}{B5595C}
\definecolor{sgreen}{HTML}{609B57}
\definecolor{spink}{HTML}{FFB0FF}
\def\thm@space@setup{%
  \thm@preskip=\parskip \thm@postskip=0pt
}
\newtheorem*{AxiomT}{Axiom $\tau$}
\newtheorem*{AxiomR}{Axiom $\rho$}
\newtheorem{Proposition}{Proposition}
\theoremstyle{definition}\newtheorem{Example}{Example}
\newtheorem{Definition}{Definition}
\newtheorem{Lemma}{Lemma}
\newtheorem{theorem}{Theorem}
\renewcommand\section{\@startsection {section}{1}{\z@}%
                                   {-1.5ex \@plus -1ex \@minus -.2ex}%
                                   {1.5ex \@plus.2ex}%
                                   {\normalfont\scshape}}
\renewcommand\subsection{\@startsection{subsection}{2}{\z@}%
                                     {-1.5ex\@plus -1ex \@minus -.2ex}%
                                     {0.25ex \@plus .2ex}%
                                     {\normalfont\itshape}}
\newcommand{\weakb}{\ensuremath{\triangleright}}
\newcommand{\strictb}{\ensuremath{\triangleright}}
\newcommand{\nstrictb}{\ensuremath{\ntriangleright}}
\title{Disentangling Revealed Preference From Rationalization by a Preference\thanks{I thank the Editor, Joseph Harrington, and an anonymous referee for feedback that greatly improved this paper.}}
\author{Pablo Schenone\thanks{Department of Economics, Fordham University. Mailing address: Fifth Floor, Dealy Hall, 441 E Fordham Rd, Bronx, NY 10458. Phone number: (718) 817-4048. E-mail: \href{mailto:pschenone@fordham.edu}{\texttt{pschenone@fordham.edu}}}}%
\begin{document}
\pagenumbering{gobble}
\maketitle
\begin{abstract}
The weak axiom of revealed preference (WARP) ensures that the revealed preference (i) is a preference relation (i.e., it is complete and transitive) \emph{and} (ii) rationalizes the choices. However, when WARP fails, either one of these two properties is violated, but it is unclear which one it is. We  provide an alternative characterization of WARP by showing that WARP is equivalent to the conjunction of two axioms each of which \emph{separately} guarantees (i) and (ii). 
\end{abstract}
\small
\textsc{Keywords:} revealed preference, weak axiom, choice correspondence, rationalization.\\
\textsc{JEL classification:}  D01, D90, D11
\normalsize
\newpage
\clearpage
\pagenumbering{arabic}
\section{Introduction}\label{section:intro}


\label{page-intro-rp} A foundational result in economic theory is that a choice correspondence satisfies the Weak Axiom of Revealed Preference (henceforth, WARP) if and only if there exists a complete and transitive ranking of alternatives--a preference relation--such that the choice correspondence maps any choice set into the subset of undominated alternatives--i.e., it rationalizes choices (see, e.g.,  \cite{arrow1959rational} and \cite{sen1971choice}). Furthermore, the preference relation that rationalizes the choice correspondence is the \emph{revealed preference}, defined as follows: an alternative $x$ is (strictly) revealed preferred to an alternative $y$ if $x$ is (uniquely) chosen from the binary set $\{x,y\}$.

In the above result, WARP performs two tasks. First, it guarantees that the revealed preference is indeed a preference relation (see Definitions \ref{def:strict preference} and \ref{def:weak preference} for formal statements). Second, it guarantees that the revealed preference rationalizes the choice correspondence. However, as the examples below show, these two properties are independent: the revealed preference may be a preference relation--even if it does not rationalize choice--and choice may be rationalized by the revealed preference--even if the revealed preference is not a preference relation.

\begin{Example}[Revealed preference is a preference relation but does not rationalize choice]\label{Example:example1}
 A decision maker chooses between four different fruits at their local grocery shop: apples ($a$), bananas ($b$), kiwis ($k$), and dates ($d$). The decision maker makes choices as follows: 
\begin{enumerate}[label=(\arabic*)]
\item\label{itm:apple-best} From sets $\{a,b\}$, $\{a,k\}$, and $\{a,d\}$, the choice is $\{a\}$,
\item\label{itm:banana-kiwi} From set $\{b,k\}$, the choice is $\{b, k\}$,
\item\label{itm:kiwis-beat-dates} From set $\{k, d\}$, the choice is $\{k\}$,
\item\label{itm:banana-beat-dates} From set $\{b, d\}$, the choice is $\{b\}$,
\item\label{itm:full} From the full set of alternatives, $\{a,b,k,d\}$,  the choice is $\{a,d\}$
\end{enumerate}

\label{page-ex-1-rp}These choices imply the following. First, the revealed preference, {as defined in the opening paragraph}, is complete and transitive: {from (1) we conclude that} apples are {strictly revealed preferred to bananas, kiwis and dates}; {from (2) we conclude that} neither bananas nor kiwis are strictly revealed preferred to each other, and {from (3) and (4) we conclude that }dates are the least preferred alternative. Second, from (5) we conclude that the revealed preference does \emph{not} rationalize the decision maker's choices, as choosing dates out of the full set of alternatives means choosing a dominated alternative.\hfill$\lozenge$ 

\end{Example}

\begin{Example}[Revealed preference rationalizes choices but is not a preference]\label{Example:example2}
\indent A decision maker  chooses between three vacation spots: Amsterdam ($a$), Brussels ($b$), or Korea ($k$).
\begin{enumerate}[label=(\arabic*)]
\item From set $\{a,b\}$, the choice is $\{a\}$,
\item From set $\{b,k\}$, the choice is $\{b, k\}$,
\item From set $\{a,k\}$, the choice is $\{a,k\}$,
\item From the full set of alternatives, $\{a,b,k\}$,  the choice is $\{a,k\}$.
\end{enumerate}

This example is the opposite of \autoref{Example:example1}. First, the revealed preference is intransitive: {from (1) we conclude that } Amsterdam is strictly revealed preferred to Brussels, {from (2) we conclude that} neither Brussels nor Korea are strictly revealed preferred to each other, but {(3) implies that} Amsterdam is not strictly revealed preferred to Korea. Second, the revealed preference rationalizes the choices because the decision maker only chooses undominated alternatives. Indeed, out of the full set of alternatives the decision maker chooses \{$a$, $k$\}, which are the only undominated choices according to their revealed preference. \hfill$\lozenge$
\end{Example}

 In this paper, we provide an alternative formulation of WARP that decouples the two tasks performed by WARP. Our first axiom, \hyperlink{axiomt}{Axiom $\tau$}, guarantees that the revealed preference obtained from a choice correspondence is indeed a preference relation, even when it may not rationalize choice (\autoref{Proposition:tau}). A second axiom, \hyperlink{axiomr}{Axiom $\rho$}, guarantees that the revealed preference, even if not a preference relation, nevertheless rationalizes the choice correspondence (\autoref{Proposition:rho}). Together, \hyperlink{axiomt}{Axiom $\tau$} and \hyperlink{axiomr}{Axiom $\rho$} imply that the revealed preference is an actual preference and rationalizes the choice correspondence (\autoref{theorem:theorem1}). 
 
 Besides providing a novel perspective on WARP, our results may be of independent interest to the literatures on non-rational decision making, such as reference dependence (see, e.g., \cite{ok2015revealed} and the references therein) or non-transitive choices (see, e.g., \cite{bernheim2009beyond} and the references therein).
 
\autoref{Proposition:tau} characterizes decision makers whose choices are rational when restricted to binary choices, even if the revealed preference does not rationalize (all of) their choices; an example of such behavior is reference dependence. Under reference dependence, the ranking of two alternatives may depend on factors other than the alternatives themselves; such other factors are denoted \emph{reference points}. A specific kind of reference points is the other alternatives available for choice. Decision makers that exhibit this form of reference dependence should be fully rational when restricted to binary choices because no other alternatives are available for choice. However, their choices from non-binary sets are affected by reference points and so the revealed preference does not rationalize their choices. 


\label{page-rangel}\autoref{Proposition:rho} characterizes decision makers whose revealed preference rationalizes their choices, even if the revealed preference is not transitive. For instance, faced with consumers with nonstandard choice behavior, \cite{bernheim2009beyond} propose a social planner that makes choices by maximizing a binary relation derived from the consumers choices, denoted \emph{unambiguously chosen over}. Like the revealed preference in \autoref{Example:example2}, unambiguously chosen over is non-transitive, yet it has well-defined maximal elements, so the unambiguously chosen over relation fully rationalizes the planner’s choices.

\paragraph{Organization} The paper proceeds as follows.  Section \ref{section:model} presents the model and Section \ref{section:results} states our axioms and results. Having laid out the necessary notation and established our results, Section \ref{section:axioms} discusses the related literature. 

\section{Model}\label{section:model}

\paragraph{Primitives} Let $X$ be a finite set of alternatives and $P(X)$ denote the set of non-empty subsets of $X$. A \emph{choice correspondence}, $c$, is a mapping from $P(X)$ onto itself, $c:P(X)\rightarrow P(X)$, that satisfies $c(A)\subseteq A$ for all $A\in P(X)$. Because $P(X)$ excludes the empty set, this implies that $c$ is non-empty valued.

\paragraph{Preferences} Given $X$, a binary relation on $X$ is any subset of the Cartesian product $X\times X$. Definitions \ref{def:strict preference} and \ref{def:weak preference} define strict and weak preference, respectively:
\begin{Definition}[Strict Preference]\label{def:strict preference}
A binary relation $\succ$ is a \emph{strict preference} if it satisfies the following:
\begin{enumerate}
\item\label{itm:asymmetry} Asymmetry: For all $x,y\in X$, $x\succ y$ implies $y\nsucc x$,
\item\label{itm:negative-transitive} Negative transitivity: For all $x,y,z\in X$, if $x\nsucc y$ and $y\nsucc z$, then $x\nsucc z$.
\end{enumerate}
\end{Definition}

\begin{Definition}[Weak preference]\label{def:weak preference}
A binary relation $\succsim$ is a \emph{weak preference} if it satisfies the following:
\begin{enumerate}
\item\label{itm:complete} Completeness: For all $x,y\in X$, either $x\succsim y$, $y\succsim x$, or both. If both conditions hold we say $x$ and $y$ are \emph{indifferent}, and we denote this as $x\sim y$.
\item\label{itm:transitive} Transitivity: For all $x,y,z\in X$, if $x\succsim y$ and $y\succsim z$, then $x\succsim z$.
\end{enumerate}
\end{Definition}

From any strict preference relation, $\succ$, we define the associated weak preference, $\succsim$, as follows: $x\succsim y$ if and only if $y\nsucc x$. Clearly, if $\succ$ is asymmetric and negatively transitive, $\succsim$ is complete and transitive. Similarly, given a weak preference, $\succsim$, we define the associated strict preference, $\succ$, as follows: $x\succ y$ if and only if $x\succsim y$, but not $y\succsim x$. Clearly, if $\succsim$ is complete and transitive, $\succ$ is asymmetric and negatively transitive.

\paragraph{\emph{Revealed} preference} Whereas preferences may be defined without explicit reference to a choice correspondence, it is desirable that preferences (which are inherently unobservable) be defined in terms of choices, which in principle are observable. This motivates the definition of a \emph{revealed preference}. 

\begin{Definition}[Revealed preference]\label{def:revealed preference}
Given a choice correspondence, $c$, the strict revealed preference obtained from $c$, denoted $\succ_c$, is defined as follows. For each $x,y\in X$, $x\succ_c y\:\Leftrightarrow\: \{x\}=c(\{x,y\})$. The associated weak revealed preference, denoted $\succsim_c$, is defined as $x\succsim_c y\:\Leftrightarrow\:x\in c(\{x,y\})$. 
\end{Definition}

\label{page-binary-choice}\autoref{def:revealed preference} defines revealed preference by looking at choice from binary sets. In Section \ref{section:axioms}, we discuss our results in the context of other definitions of revealed preference, such as that in \cite{richter1971rational}.

Despite its name, nothing so far guarantees that the revealed preference is a \emph{preference} relation in 
the sense of Definitions \ref{def:strict preference} and \ref{def:weak preference}. Because $c$ is non-empty valued, the weak revealed preference, $\succsim_c$, is always complete. However, additional axioms on $c$ are required to guarantee $\succsim_c$ is transitive. Similarly, the strict revealed preference, $\succ_c$, is always asymmetric: $x \succ_c y \Rightarrow \{x\}=c(\{x,y\}) \Rightarrow \{y\}\neq c(\{x,y\})\Rightarrow y \nsucc_c x$. However, additional axioms are needed to guarantee $\succ_c$ is negatively transitive. 

\paragraph{Choices induced by a binary relation} Given a binary relation, \strictb, we can construct the \strictb-undominated choices as follows: for each $A\in P(X)$, $c(A,\strictb)=\{x\in A:\:(\forall y\in A)\: y \nstrictb x\}$.


\paragraph{Rationalizing choice correspondences} Given a choice correspondence, $c$, we can extract the revealed preference, $\succ_c$, and from $\succ_c$ we can then construct the $\succ_c$-undominated choices defined above. We say that $\succ_c$ \emph{rationalizes} $c$ if this composition of operations recovers the original choice correspondence, $c$. That is, $\succ_c$ \emph{rationalizes} $c$ if $c(\cdot)=c(\cdot, \succ_c)$. 



\section{Results}\label{section:results}



Our results disentangle when a choice correspondence generates a revealed preference that is a preference relation, from when the revealed preference rationalizes choice. We do so by providing two axioms, \hyperlink{axiomt}{$\tau$} and \hyperlink{axiomr}{$\rho$}, on the choice correspondence. \hyperlink{axiomt}{Axiom $\tau$} is akin to a transitivity axiom and holds if and only if the choice correspondence generates a preference (\autoref{Proposition:tau}).  \hyperlink{axiomr}{Axiom $\rho$} rules out possible reference dependence and holds if and only if the revealed preference rationalizes choice (\autoref{Proposition:rho}). Together, Axioms \hyperlink{axiomt}{$\tau$} and \hyperlink{axiomr}{$\rho$} imply that a preference relation exists that rationalizes choices (\autoref{theorem:theorem1}). We first preview the statement of \autoref{theorem:theorem1} and then present and discuss the axioms. \label{page-axioms-theorem}

\begin{theorem}\label{theorem:theorem1}
A choice correspondence $c$ satisfies Axioms \hyperlink{axiomt}{$\tau$} and \hyperlink{axiom}{$\rho$} if and only if $\succ_c$ is a preference relation and $\succ_c$ rationalizes $c$.
\end{theorem}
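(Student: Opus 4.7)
The plan is to prove \autoref{theorem:theorem1} as an essentially immediate corollary of \autoref{Proposition:tau} and \autoref{Proposition:rho}, since those two propositions have already been announced to characterize, separately, the two conclusions in the theorem. Thus the real content of \autoref{theorem:theorem1} is that the two characterizations can simply be conjoined without any interaction term — that nothing extra needs to be assumed beyond the conjunction of \hyperlink{axiomt}{Axiom $\tau$} and \hyperlink{axiomr}{Axiom $\rho$}, and that the two conclusions together recover the classical WARP conclusion.

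I would structure the argument as a biconditional split into two directions. For the ``only if'' direction, I would assume $c$ satisfies both axioms. Applying \autoref{Proposition:tau} to \hyperlink{axiomt}{Axiom $\tau$} alone yields that $\succ_c$ is a strict preference relation in the sense of \autoref{def:strict preference}, i.e., asymmetric and negatively transitive. Applying \autoref{Proposition:rho} to \hyperlink{axiomr}{Axiom $\rho$} alone yields that $\succ_c$ rationalizes $c$, i.e., $c(\cdot) = c(\cdot,\succ_c)$. These together are exactly the conclusion. For the ``if'' direction, I would assume $\succ_c$ is a preference relation and rationalizes $c$; then the reverse implications in \autoref{Proposition:tau} and \autoref{Proposition:rho} (each of which is a biconditional characterization) deliver \hyperlink{axiomt}{Axiom $\tau$} and \hyperlink{axiomr}{Axiom $\rho$}, respectively.

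There is essentially no obstacle at the level of \autoref{theorem:theorem1} itself: the bookkeeping is trivial once \autoref{Proposition:tau} and \autoref{Proposition:rho} are in hand, because each axiom is already tailored to capture precisely one of the two independent conclusions. The only thing to verify is that the two propositions use compatible definitions of $\succ_c$ (they do, per \autoref{def:revealed preference}) and that no interaction between the axioms is implicitly needed — this is guaranteed because each proposition is stated as a standalone biconditional about $c$ and $\succ_c$. The substantive work, and hence the main obstacle in the paper, lives not in \autoref{theorem:theorem1} but in the proofs of \autoref{Proposition:tau} and \autoref{Proposition:rho}, where one must show that \hyperlink{axiomt}{Axiom $\tau$} is exactly strong enough to yield negative transitivity of $\succ_c$ without forcing rationalization, and that \hyperlink{axiomr}{Axiom $\rho$} is exactly strong enough to yield rationalization without forcing transitivity — the decoupling illustrated by Examples \ref{Example:example1} and \ref{Example:example2}.
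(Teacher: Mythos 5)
Your proposal is correct and matches the paper exactly: the paper itself states that Theorem \ref{theorem:theorem1} follows directly from Propositions \ref{Proposition:tau} and \ref{Proposition:rho}, conjoining the two standalone biconditionals just as you describe. No further comment is needed.
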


The proof of Theorem \ref{theorem:theorem1} follows from Propositions \ref{Proposition:tau} and \ref{Proposition:rho}, to which we turn next.

\paragraph{Making revealed preference a preference} The revealed preference might not be a preference relation for two reasons. First, the strict revealed preference, $\succ_c$, might not be transitive, leading to choice cycles. If $x\succ_c y$, $y\succ_c z$, and $z\succ_c x$ for some alternatives $x,y,z\in X$, then $\succ_c$ is not a preference. Moreover, $\succ_c$-undominated choices may be empty-valued--i.e., $c(\{x,y,z\},\succ_c)=\emptyset$--because all alternatives are dominated. Second, $\succ_c$ might not be a preference because it may not be transitive through indifference. That is, if $x\succ_c y$, $y \sim_c z$, and $x\sim_c z$ for some alternatives $x,y,z\in X$, then $\succ_c$ is not a preference. In contrast to the previous example, the $\succ_c$-undominated choices are not empty-valued: if $x\succ_c y$, $y \sim_c z$, and $x\sim_c z$ for some alternatives $x,y,z\in X$, then $c(\{x,y,z\},\succ_c)=\{x,z\}$.

The following axiom rules out these two problems, and is necessary and sufficient for $\succ_c$ to be a strict preference. Because the axiom is essentially a transitivity axiom, we denote it by $\tau$.

\begin{AxiomT}\label{AxiomTau}\hypertarget{axiomt}{}
For all $x,y,z\in X$, if $z\in c(\{z,y\})$ and $\{x\}=c(\{x,z\})$, then $\{x\}=c(\{x,y\})$.
\end{AxiomT}

Clearly, \autoref{Example:example1} satisfies \hyperlink{axiomt}{Axiom $\tau$}, whereas \autoref{Example:example2} does not. In \autoref{Example:example2}, setting $x=a$, $z=b$, and $y=k$, we obtain a violation of \hyperlink{axiomt}{Axiom $\tau$}. By \autoref{Proposition:tau} below, the revealed preference in \autoref{Example:example1} is a preference, but it is not in \autoref{Example:example2}. 

\begin{Proposition}\label{Proposition:tau}\hypertarget{Proposition:tau}
Choice correspondence $c$ satisfies \hyperlink{axiomt}{Axiom $\tau$} if and only if $\succ_c$ is a preference relation.
\end{Proposition}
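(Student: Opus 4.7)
The plan is to translate Axiom $\tau$ directly into a statement about the relation $\succ_c$ and observe that, after this translation, it literally becomes the contrapositive of negative transitivity. Since asymmetry of $\succ_c$ is already established in the model section (the display $x \succ_c y \Rightarrow \{x\}=c(\{x,y\}) \Rightarrow y \nsucc_c x$), the only content of \autoref{Proposition:tau} is the equivalence between Axiom $\tau$ and negative transitivity.

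First I would record the basic dictionary between choice statements and the revealed preference. Since $c(\{x,y\})\subseteq\{x,y\}$ and $c$ is non-empty valued, we have $\{x\}=c(\{x,y\})\iff x\succ_c y$ and $y\in c(\{x,y\})\iff y\succsim_c x \iff x\nsucc_c y$. Applying this dictionary to the statement of Axiom $\tau$, the hypothesis ``$z\in c(\{z,y\})$'' becomes ``$y\nsucc_c z$'', the hypothesis ``$\{x\}=c(\{x,z\})$'' becomes ``$x\succ_c z$'', and the conclusion ``$\{x\}=c(\{x,y\})$'' becomes ``$x\succ_c y$''. So Axiom $\tau$ is equivalent to the schema
\[
(x\succ_c z)\wedge(y\nsucc_c z)\ \Longrightarrow\ x\succ_c y \qquad\text{for all } x,y,z\in X.
\]

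Next I would rearrange this schema. Its contrapositive is: whenever $x\nsucc_c y$, either $x\nsucc_c z$ or $y\succ_c z$; equivalently, $x\nsucc_c y$ and $y\nsucc_c z$ together imply $x\nsucc_c z$. This is precisely the negative transitivity clause of \autoref{def:strict preference} applied to $\succ_c$. Combined with the automatic asymmetry of $\succ_c$ noted above, this yields both implications of the proposition: if $c$ satisfies Axiom $\tau$, then $\succ_c$ is asymmetric and negatively transitive, hence a strict preference; conversely, if $\succ_c$ is a strict preference, negative transitivity hands back exactly the implication encoded in Axiom $\tau$ once we rewrite it through the dictionary.

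There is essentially no obstacle beyond the bookkeeping of the dictionary, so I would keep the write-up short: state the dictionary, rewrite Axiom $\tau$ as the displayed schema above, observe that its contrapositive is negative transitivity, and invoke the already-established asymmetry to close the equivalence. The one place I would be careful is the quantification: Axiom $\tau$ is stated with a labeling convention on the three variables, so when converting to negative transitivity I would explicitly note that permuting the roles of the universally quantified $x,y,z$ is harmless, ensuring that I really obtain the full negative-transitivity statement and not just one oriented instance of it.
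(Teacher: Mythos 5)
Your proposal is correct and is essentially the paper's own argument in condensed form: the paper's forward direction is exactly your contrapositive (it assumes $x\nsucc_c y$, $y\nsucc_c z$, $x\succ_c z$ and derives a contradiction via Axiom $\tau$), and its converse uses transitivity of $\succsim_c$, which is just negative transitivity of $\succ_c$ read through your dictionary. Your observation that $y\in c(\{x,y\})\iff x\nsucc_c y$ relies on $c$ being non-empty valued, which you correctly note, so the single logical-equivalence write-up is sound.
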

\begin{proof}
$(\Rightarrow):$ 
Suppose $c$ satisfies Axiom $\tau$.
By construction, $\succ_c$ is always asymmetric.
Now, let $x,y,z$ be such that $x\nsucc_c y$, $y\nsucc_c z$.
By way of contradiction, assume $x\succ_c z$.
Then, we have the following conditions: $\{x\}=c(\{x,z\})$, because $x\succ_c z$; $z\in c(\{z,y\})$, because $y\nsucc_c z$; and $y\in c(\{x,y\})$, because $x\nsucc_c y$.
From $\{x\}=c(\{x,z\})$ and $z\in c(\{z,y\})$, Axiom $\tau$ implies $\{x\}=c(\{x,y\})$.
This is a contradiction because $y\in c(\{x,y\})$.
Thus, $\succ_c$ is negatively transitive and asymmetric.

$(\Leftarrow):$ 
Suppose $\succ_c$ is a (strict) preference relation.
Then, the weak part of $\succ_c$, $\succsim_c$, is transitive.
Let $x,y,z\in X$ satisfy that $\{x\}=c(\{x,z\})$ and $z\in c(\{z,y\})$.
Then, $x \succsim_c z$ and $z\succsim_c y$.
Thus, $x\succsim_c y$, so $x\in c(\{x,y\})$.
It remains to show that $y\notin c(\{x,y\})$.
If $y\in c(\{x,y\})$ then $y\succsim_c x$.
Because we already have that $z\succsim_c y$, transitivity of $\succsim_c$ implies that $z\succsim_c x$.
Thus, $z\in c(\{x,z\})$, contradicting that $\{x\}=c(\{x,z\})$.
Hence, $y\notin c(\{x,y\})$, thus showing that $\{x\}=c(\{x,y\})$, and Axiom $\tau$ holds.
\end{proof}

\paragraph{Revealed preference rationalizes choice} 
\label{page-luce-raiffa}A choice correspondence may not be rationalized by the revealed preference for two reasons. First, the set of undominated choices for a specific choice set might be empty. Second, how favorably an alternative $x$ compares to another alternative $y$ might depend on what other alternatives, $z$, are also available to choose from. To illustrate, recall \citeauthor{luce1989games}'s restaurant example. When offered a choice between chicken ($y$) and steak tartare ($x$), the diner chooses $y$ -- that is, $c(\{x,y\})=\{y\}$. If, in addition, the diner is offered frog legs ($z$), the diner chooses steak tartare-- that is, $c(\{x,y,z\})=\{x\}$. In the language of \cite{ok2015revealed},  $z$ acts as a \emph{revealed reference} for $y$: the availability of $z$ signals the restaurant is of good quality and therefore the diner is willing to choose $x$. Our \hyperlink{axiomr}{Axiom $\rho$} rules out alternatives acting as reference points, and thus guarantees that $c$ can be rationalized by $\succ_c$.


\begin{AxiomR}\label{AxiomRho}\hypertarget{axiomr}
For all $x,y\in X $ and all $B\subset X$, $x\in c(\{x,y\})\cap c(B\cup \{x\})$ if and only if $x\in c(B\cup\{x,y\})$.
\end{AxiomR}


To understand \hyperlink{axiomr}{Axiom $\rho$}, suppose alternative $x$ is chosen from the set $B\cup\{x,y\}$. This could happen because $x$ is judged at least as good as $y$--while also being chosen in $B$--and the decision maker is choosing their most preferred alternative. Alternatively, this could happen because some other alternative, $z\in B$, acts as a reference point: once $z$ is removed from $B$, $x$ is no longer as good as $y$. \hyperlink{axiomr}{Axiom $\rho$} rules this out by requiring that $x$ also be chosen over $y$ when we eliminate all potential reference alternatives--that is, $x$ must also be chosen from $\{x,y\}$-- while still being better than anything in $B$. Conversely, assume that $\{x\}=c(\{x,y\})$ and $x\in c(B\cup\{x\})$. Then, $x$ is revealed preferred to $y$ while also being as good as any other $z\in B$. However, adding $y$ to $B\cup\{x\}$ could act as a reference point for some $z\in B$ so that $x$ is no longer as good as all other alternatives in $B\cup\{x,y\}$; \hyperlink{axiomr}{Axiom $\rho$} rules this out as well. 

Clearly, \autoref{Example:example2} satisfies \hyperlink{axiomr}{Axiom $\rho$}, whereas \autoref{Example:example1} does not. In \autoref{Example:example1}, setting $B=\{b,k\}$, $x=d$, $y=a$, we obtain a violation of \hyperlink{axiomr}{Axiom $\rho$}. By \autoref{Proposition:rho} below, the revealed preference rationalizes choices in \autoref{Example:example2}, but it does not in \autoref{Example:example1}.

\begin{Proposition}\label{Proposition:rho}\hypertarget{Proposition:rho}
Choice correspondence $c$ satisfies \hyperlink{axiomr}{Axiom $\rho$} if and only if $\succ_c$ rationalizes $c$.
\end{Proposition}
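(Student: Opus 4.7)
The plan hinges on the tautology $y \not\succ_c x \iff x \in c(\{x,y\})$, which follows from $c$ being non-empty valued on binary sets. Using this, $x\in c(A,\succ_c)$ is equivalent to saying $x\in c(\{x,y\})$ for every $y\in A$. The proof then reduces to translating both the rationalization condition and Axiom $\rho$ through this reformulation.

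For the ($\Leftarrow$) direction, I would assume $c(\cdot)=c(\cdot,\succ_c)$ and verify Axiom $\rho$ directly. The point is that $x\in c(B\cup\{x,y\},\succ_c)$ is equivalent to $w\not\succ_c x$ for every $w\in B\cup\{x,y\}$, which in turn is equivalent to the conjunction ($y\not\succ_c x$) and ($z\not\succ_c x$ for every $z\in B$). Translating the first conjunct back gives $x\in c(\{x,y\})$ and the second gives $x\in c(B\cup\{x\},\succ_c)$; applying the rationalization hypothesis on each side of the ``iff'' yields Axiom $\rho$.

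For the ($\Rightarrow$) direction, assume Axiom $\rho$ and prove the two inclusions of $c(A)=c(A,\succ_c)$ separately. The easy inclusion $c(A)\subseteq c(A,\succ_c)$ goes as follows: given $x\in c(A)$ and any $y\in A$ with $y\neq x$, set $B=A\setminus\{x,y\}$ so that $B\cup\{x,y\}=A$; Axiom $\rho$ then forces $x\in c(\{x,y\})$, i.e.\ $y\not\succ_c x$, and since $y$ was arbitrary this says $x\in c(A,\succ_c)$.

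The reverse inclusion $c(A,\succ_c)\subseteq c(A)$ is the main obstacle, and I would prove it by induction on $|A|$. The cases $|A|\leq 2$ are immediate from the definition of $\succ_c$. For the inductive step with $|A|\geq 3$ and $x\in c(A,\succ_c)$, pick any $y\in A\setminus\{x\}$ and set $A'=A\setminus\{y\}$, $B=A\setminus\{x,y\}$. Because the undomination condition passes to subsets, $x\in c(A',\succ_c)$, so the inductive hypothesis delivers $x\in c(A')=c(B\cup\{x\})$. Combined with $x\in c(\{x,y\})$ (since $y\not\succ_c x$), Axiom $\rho$ upgrades this to $x\in c(B\cup\{x,y\})=c(A)$, closing the induction. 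The one subtlety to watch for is the degenerate case $B=\emptyset$ in applications of Axiom $\rho$, but there the axiom reduces to a tautology and the argument goes through unchanged.
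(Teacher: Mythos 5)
Your proposal is correct and follows essentially the same route as the paper: the same translation of rationalization into binary-choice membership for the ($\Leftarrow$) direction, the same one-line application of Axiom $\rho$ with $B\cup\{x,y\}=A$ for the inclusion $c(A)\subseteq c(A,\succ_c)$, and the same induction on $|A|$ (removing one element $y$ and applying the axiom to $B=A\setminus\{x,y\}$) for the reverse inclusion. The remark about the degenerate case $B=\emptyset$ is a fine extra precaution but changes nothing of substance.
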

\begin{proof}
$(\Leftarrow):$ 
Suppose $\succ_c$ rationalizes $c$.
Choose any $x,y\in X$ and $B\subset X$.

We start by checking that if $x\in c(B\cup\{x,y\})$ then $x\in c(B\cup\{x\})$ and $x\in c(\{x,y\})$.
Because $\succ_c$ rationalizes $c$, then $x\in c(B\cup\{x,y\})$ implies $x\in c(\{x,z \})$ for all $z\in B\cup\{x,y\}$.
Therefore, $x\in c(\{x,y\})$ and $x\in c(B\cup\{x\}, \succ_c)$.
Again, because $\succ_c$ rationalizes $c$, $x\in c(B\cup\{x\}, \succ_c)$ implies $x\in c(B\cup\{x\})$ and this completes the argument. 

We now check the converse: if $x\in c(B\cup\{x\})$ and $x\in c(\{x,y\})$ then $x\in c(B\cup\{x,y\})$.
Because $\succ_c$ rationalizes $c$, then $x\in c(B\cup\{x\})$ implies $x\in c(\{x,z \})$ for all $z\in B$.
Because we already know that $x\in c(\{x,y\})$, we get $x\in c(\{x,z \})$ for all $z\in B\cup\{y\}$.
Because $\succ_c$ rationalizes choice, this implies $x\in c(B\cup\{x,y\})$ and this completes the argument.

$(\Rightarrow):$ 
Suppose Axiom $\rho$ holds, we want to show that $c(A)=c(A,\succ_c)$.
Let $A\in P(X)$.
If $A$ has two elements, then $c(A)=c(A,\succ_c)$ by construction and this concludes the proof.
We now need to show that $c(A)=c(A,\succ_c)$ for sets $A$ that have at least three elements.
We proceed with an iterative argument: consider the predicate p($k$)= ``Let $A\subset X$ have $k\in\{3,...,K\}$ elements. If  $x\in c(A,\succ_c)$, then $x\in c(A)$".
For $k=3$ this claim is immediately true because of Axiom $\rho$.
Indeed, $x\in c(A,\succ_c)$ implies $x\in c(\{x,y\})$ for each $y\in A\equiv\{a_1,a_2,a_3\}$.
Without loss of generality, set $x\equiv a_1$ and choose $y=a_2$ and $B=\{a_3\}$. 
By Axiom $\rho$, $x\in c(\{x,a_2\})\cap c(\{x\}\cup\{a_3\})$ implies $x\in c(\{x,a_2,a_3\})=c(A)$ and this completes the argument.  
Suppose now that p($k$) is true for $k$, we want to show it is true for $k+1$.
Let $A$ have $k+1$ elements and suppose $x\in c(A,\succ_c)$.
Without loss of generality, $A=\{a_1,....,a_{k+1}\}$ with $x=a_1$.
Define $B=\{a_1,...,a_k\}$ and $y=a_{k+1}$.
We make two observations.
First, $x\in c(A,\succ_c)$ implies $x\in c(B,\succ_c)$ and $x\in c(\{x,y\})$ by definition of $c(\cdot,\succ_c)$.
Second, $B$ has $k$ elements.
Thus, p($k$) implies $x\in c(B)$.
Because we also know that $x\in c(\{x,y\})$, Axiom $\rho$ implies $x\in c(B\cup\{x,y\})$. 
Because $B\cup\{x,y\}=A$ this completes the argument.

We now show that $c(A)\subset c(A,\succ_c)$.
Let $x\in c(A)$ and let $y\in A$ be arbitrarily selected.
Then, Axiom $\rho$ implies $x\in c(\{x,y\})\cap c(A)$; simply select $B=A=A\cup\{x,y\}$ in the axiom.
In particular, $x\in c(\{x,y\})$.
Because $y\in A$ was arbitrarily selected this implies that $x\in c(A,\succ_c)$ and this concludes the proof.
\end{proof}

\autoref{Proposition:tau} provides a necessary and sufficient condition for $\succ_c$ to be a preference relation, even when it does not rationalize choice. Similarly, \autoref{Proposition:rho} provides a necessary and sufficient condition for $\succ_c$ to rationalize choice, even when $\succ_c$ is not an actual preference. Thus, these propositions jointly replicate the rationalization result from WARP. Moreover, these axioms disentangle when the revealed preference is indeed a preference from when the revealed preference rationalizes choice.


\section{Related literature}\label{section:axioms} 

\paragraph{Rationalizing choice} \label{page-richter}Classic papers such as \cite{samuelson1938note}, \cite{houthakker1950revealed}, \cite{richter1966revealed}, \cite{richter1971rational}, and \cite{sen1971choice}, provide different characterizations of when choice behavior is consistent with maximizing a preference relation. Among these, the most closely related are \cite{richter1971rational} and \cite{sen1971choice}. 

\label{page-richter2}Both \cite{richter1971rational} and \cite{sen1971choice} work with a binary relation, which Sen dubs the ``at least as good" relation, defined as follows: an alternative $x$ is \emph{at least as good as} an alternative $y$, denoted $xVy$, if a set $A$ in $X$ exists such that $x,y\in A$ and $x\in c(A)$. The $V$-maximal choices out of a set $A$ are given by $c_M(A,V)=\{x\in A: \: (\forall y\in A)\: xVy\: \}$. When $c(\cdot)=c_M(\cdot,V)$, \cite{richter1971rational} says that $c$ satisfies the $V$-axiom, while \cite{sen1971choice} says that $c$ is \emph{normal}. 


\label{verbatim} \autoref{Proposition:rho} shows that $c(\cdot)=c(\cdot,\succ_c)$, that is, $\succ_c$ rationalizes $c$, if and only if it satisfies \hyperlink{axiomr}{Axiom $\rho$}. \cite{richter1971rational}  studies a similar question and shows that $c=c_M(\cdot,\weakb)$ for some reflexive binary relation \weakb\ if and only if  $c=c_M(\cdot,V)$, that is, $c$ satisfies the $V$-axiom.\footnote{We say \weakb\ is \emph{reflexive} if for all $x\in X$, $x\weakb x$.} To connect both exercises, we need to connect rationalization via the strict relation $\succ_c$ with the property $c=c_M(\cdot,V)$. Using results from \cite{sen1971choice}, \autoref{Lemma:normal} shows that $c$ is rationalizable by $\succ_c$ if and only if $c=c_M(\cdot, V)$--equivalently, $c$ satisfies the $V$-axiom.

\begin{Lemma}\label{Lemma:normal}
Choice correspondence $c$ satisfies the $V$-axiom if and only if $\succ_c$ rationalizes $c$.
\end{Lemma}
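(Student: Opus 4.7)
The plan is to put both sides of the equivalence into a common form that refers only to the weak revealed preference $\succsim_c$ (defined by $x \succsim_c y \iff x \in c(\{x,y\})$) and the ``at least as good as'' relation $V$, and then to exploit the fact that $\succsim_c \subseteq V$ always holds, while the reverse inclusion follows once one side of the biconditional is assumed. Throughout, write $c_M(A,R)=\{x\in A:\forall y\in A,\ xRy\}$ for the $R$-maximal elements of $A$.

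First I would record the translation that makes the two conditions comparable. Since $y\not\succ_c x$ is equivalent to $x\in c(\{x,y\})$, i.e.\ to $x\succsim_c y$, the $\succ_c$-undominated choices satisfy $c(A,\succ_c)=c_M(A,\succsim_c)$. Therefore $\succ_c$ rationalizes $c$ if and only if $c(\cdot)=c_M(\cdot,\succsim_c)$, which places it on the same footing as the $V$-axiom $c(\cdot)=c_M(\cdot,V)$. Next I would note the universal inclusion $\succsim_c\subseteq V$: if $x\in c(\{x,y\})$ then the set $A=\{x,y\}$ witnesses $xVy$. Consequently $c_M(A,\succsim_c)\subseteq c_M(A,V)$ for every $A$, which will supply one inclusion in each direction of the proof.

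For the direction ($V$-axiom $\Rightarrow$ $\succ_c$ rationalizes $c$), I would specialize $c=c_M(\cdot,V)$ to binary sets: $x\in c(\{x,y\})$ iff $xVy$ and $xVx$, the latter being automatic from $x\in c(\{x\})$. Hence $\succsim_c$ and $V$ coincide as binary relations on $X$, so $c_M(\cdot,\succsim_c)=c_M(\cdot,V)=c$, which by the translation above means $\succ_c$ rationalizes $c$. For the converse, assume $c=c_M(\cdot,\succsim_c)$. The easy inclusion $c(A)\subseteq c_M(A,V)$ follows from $\succsim_c\subseteq V$. For the reverse, given $x\in c_M(A,V)$ and any $y\in A$, by definition of $V$ there exists $B\ni x,y$ with $x\in c(B)=c_M(B,\succsim_c)$, which forces $x\succsim_c y$; since $y\in A$ was arbitrary, $x\in c_M(A,\succsim_c)=c(A)$.

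The main obstacle is precisely this last step: the existential quantifier in the definition of $V$ (``\emph{some} superset witnesses the comparison'') has to be collapsed to the binary comparison encoded by $\succsim_c$, and this collapse is exactly what $\succsim_c$-rationalization buys us. The rest of the argument is bookkeeping: a translation between the strict-preference formulation $c(\cdot,\succ_c)$ and the weak-preference formulation $c_M(\cdot,\succsim_c)$, plus the trivial inclusion $\succsim_c\subseteq V$.
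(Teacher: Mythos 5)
Your proof is correct and takes essentially the same route as the paper's: both directions come down to comparing $\succsim_c$ with $V$, with the reverse direction (rationalization $\Rightarrow$ $V$-axiom) argued identically. The only difference is that where the paper cites \cite{sen1971choice} for the identity $\succsim_c=V$ under normality, you derive it directly by specializing the $V$-axiom to binary sets, which makes the forward direction self-contained.
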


Consequently, \autoref{Lemma:normal} shows that \autoref{Proposition:rho} also holds if we adopt $V$ as our definition of revealed preference.

\begin{proof}
($\Rightarrow$) If $c$ satisfies the $V$-axiom, then $c$ is normal. Letting $\succsim_c$ denote the weak part of $\succ_c$, \cite{sen1971choice} shows that $\succsim_c=V$. Thus, $c_M(\cdot,V)=c_M(\cdot,\succsim_c)=c(\cdot,\succ_c)$. Because $c$ is normal, $c=c_M(\cdot,V)=c(\cdot,\succ_c)$, so $\succ_c$ rationalizes $c$.

($\Leftarrow$) Suppose $\succ_c$ rationalizes $c$. It suffices to show that $c=c_M(\cdot,V)$. For each $A\subset X$, $x\in c(A)\Rightarrow x\in c(A,\succ_c)\Rightarrow\:(\forall y\in A),\: x\in c(\{x,y\})\Rightarrow\:(\forall y\in A)\: xVy\Rightarrow\:x\in c_M(A,V)$. Therefore, $c(\cdot)\subset c_M(\cdot,V)$. We now show the converse: for all $A\subset X$, $x\in c_M(A,V)\Rightarrow (\forall\:y\in A)\:(\exists B_y) : x,y\in B_y\: \text{and}\:x\in c(B_y)$. Because $\succ_c$ rationalizes $c$ and because $x,y\in B_y$ this implies $x\in c(\{x,y\})$ for all $y\in A$, so that $x\in c(A)$. 
\end{proof}

Assuming normality, \cite{sen1971choice} also studies when $V$ is quasi-transitive, that is, when its strict part is transitive. Under normality,  \cite{sen1971choice} shows that quasi-transitivity of $V$ is equivalent to Axiom $\delta$: $(\forall S,T\in X)$, $(\forall x,y\in c(S))$, if $S\subset T$ then $\{x\}\neq c(T)$. In contrast, \autoref{Proposition:tau} shows that \hyperlink{axiomt}{Axiom $\tau$} is equivalent to negative transitivity of $\succ_c$ independently of whether $c$ is normal.\footnote{Under normality, \hyperlink{axiomt}{Axiom $\tau$} implies Axiom $\delta$: \hyperlink{axiomt}{Axiom $\tau$} implies transitivity of $\succ_c$, which coincides with the strict part of $V$ because of normality. Thus, Axiom $\delta$ holds. However, as \autoref{Example:example2} illustrates, the converse is false.}

%

\paragraph{Reference dependence} \cite{ok2015revealed} define when an alternative $z$ is a reference point for choosing an alternative $x$ over an alternative $y$: either $x \in c(\{x,y,z\})$ but $x\notin c(\{x,y\})$, or $y\in c(\{x,y\})$ but $\{x,y\}\cap c(\{x,y,z\})=\{x\}$. Both these definitions have the flavor that $x$ cannot ``beat" $y$ by itself, but it can do so with the help of $z$. 

Axiom $\rho$ generalizes this idea by ruling out that no \emph{set} of alternatives can be a reference point. Consider a choice set with four alternatives: $X=\{x,y,z,w\}$. Define $c$ so that $\{x\}=c(\{x,y\})=c(\{x,y,z\})=c(\{x,y,w\})$ and $y\in c(\{x,y,w,z\})$. Neither $w$ nor $z$ \emph{individually} act as a reference point that overturns the revealed preference, $x\succ_c y$, because $x$ is still chosen over $y$ from all three element sets. However, adding both $w$ and $z$ \emph{simultaneously} acts as a reference point that makes $y$ be chosen over $x$. 

That sets of alternatives, rather than individual alternatives, may act as reference points is an idea present in other papers as well.
The ``categorize-then-choose" procedure introduced in \cite{categorize-then-choose} is such an example. A decision maker makes choices out of a set $A$ by first dividing $A$ into subsets; these subsets are denoted ``categories". For example, when choosing what to eat, the decision maker might place different dishes into categories--Italian, Indian, Asian, and American. Next, the decision maker ranks these categories--for instance, Italian might dominate all other categories. Finally, the decision maker ignores all alternatives that belong to a dominated category and chooses their most preferred alternative from those that remain. It is immediate to check that this procedure need not satisfy Axiom $\rho$, so the revealed preference need not rationalize choice. In our example, the decision maker's preferred food might be an American dish, but because this alternative belongs to a dominated category, it is never chosen. In this example, it is the categories, not any individual alternative, that act as reference sets that distort decision making away from $c(\cdot,\succ_c)$. 

\section{Conclusions}\label{section:conclusions}
 As the introductory examples illustrate, the revealed preference may be a well-defined preference even if it does not rationalize choice, and the revealed preference may fail to be an actual preference even when it does rationalize choice. We shed new light on a classical result--that WARP holds if and only if the revealed preference is a preference relation that rationalizes choice--by disentangling whether the revealed preference is an actual preference from whether it rationalizes choice. 
\bibliographystyle{ecta}
\bibliography{WARPbib} 
\end{document}